\def\cal#1{\fam2#1}
\newtheorem{theorem}{\sc Theorem}
\newtheorem{lemma}{\sc Lemma}
\newtheorem{prope}{\sc Property}
\newtheorem{coro}{\sc Corollary}
\newtheorem{nota}{\sc Notation}
\newtheorem{defin}{\sc Definition}
\newenvironment{proof}{\par \sc Proof.\rm}{\hspace*{\fill}$\Box$\vspace{1ex}}
\newenvironment{remark}{\begin{rem}}{\hspace*{\fill}$\Diamond$\end{rem}}
\newenvironment{corollary}{\begin{coro}}{\end{coro}}
\newenvironment{definition}{\begin{defin}}{\end{defin}}
\def\ds{\displaystyle}
\def\nn{\mathbb N}
\def\nn{\mathbb N}
\newcommand{\ld}[1]{\mbox{depth}_{#1}}
\begin{document}
\date{}
\begin{frontmatter}

\title{On the Logical Depth Function and the Running Time of Shortest Programs}

\thanks[title]{This work was supported by FCT projects PEst-OE/EEI/LA0008/2011 and PTDC/EIA-CCO/099951/2008. The authors are also supported by the grants SFRH/BPD/76231/2011 and SFRH/BD/33234/2007 of FCT.\\
Adress authors 1,3: Departamento de Ci\^encia de Computadores
R.Campo Alegre, 1021/1055, 4169 - 007 Porto - Portugal.
Email author 1: {\tt lfa@dcc.fc.up.pt},Email author 3: {\tt andreiasofia@ncc.pt}.\\
Adress author 2: Departamento de Matem\'atica IST
Av. Rovisco Pais, 1, 1049-001 Lisboa - Portugal.
Email: {\tt a.souto@math.ist.utl.pt}.\\
Author 4: CWI, Science Park 123, 1098XG Amsterdam, The Netherlands.
Email: {\tt Paul.Vitanyi@cwi.nl}.
}

\author{L. Antunes$^1$ \and A. Souto$^2$ \and A. Teixeira$^3$ \and 
P.M.B. Vit\'anyi$^4$}

\address{1,3 Instituto de Telecomunica\c{c}\~{o}es 
and\\
 Faculdade de Ci\^{e}ncias Universidade do Porto
\\ 2 Instituto de Telecomunica\c{c}\~{o}es and \\ 
Instituto Superior T\'ecnico, Universidade T\'ecnica de Lisboa
\\4 CWI and University of Amsterdam}


\begin{abstract}
For a finite binary string $x$ its logical
depth $d$ for significance $b$ is the shortest running time of
a program for $x$ of length $K(x)+b$. There is another definition
of logical depth. We give a new proof that the two versions are close.
There is a sequence of strings of consecutive lengths 
such that for every string there is a $b$ such that
incrementing $b$ by 1 makes the associated depths go
from incomputable to computable. The maximal gap between depths
resulting from incrementing appropriate $b$'s by 1 is incomputable. 
The size of this gap is upper bounded by the Busy Beaver function.
Both the upper and the lower bound  hold for
the depth with significance 0.
As a consequence,
the minimal computation time of the associated shortest programs 
rises faster than any computable function 
but not so fast as the Busy Beaver function.

{\bf Classification}:
 Logical depth, Kolmogorov complexity, Information measures, 
Busy Beaver function.
\end{abstract}
\end{frontmatter}

\section{Introduction}

The logical depth is related to complexity with bounded resources.
Computing a string $x$ from one of its shortest programs 
may take a very long time.
However, computing the same string from a program a simple `print$(x)$'
program of length about $|x|$ bits takes very little time.

A program for $x$ of larger length than a given program 
for $x$ may decrease the computation time but except for
pathological cases does not increase it.
Therefore, except for pathological cases we associate
the longest computation time with a shortest
program for $x$. Such a program is incompressible. 
There arises the question how much time can be saved 
by computing a given string from a $b$-incompressible  program 
(a program that can be compressed by at most $b$ bits) when $b$ rises.
\subsection{Related Work}
The minimum time to compute a string by a $b$-incompressible 
program was first considered in
\cite{ben88}. This minimum time
is called the {\em logical depth} at significance 
$b$ of the string concerned.
Definitions, variations, discussion and early results can be found
in the given reference.
A more formal treatment as well as an intuitive approach was given in
the textbook \cite{LiVi}, Section 7.7. 
In \cite{afmv06} the notion of {\em computational}
depth is defined as $K^d(x) -K(x)$ (see definitions below).
This would equal the negative logarithm of the expression
$Q^d_U(x)/Q_U(x)$ in Definition~\ref{def.ld} if the following were 
proved. Since \cite{lev74}
proved in the so-called Coding Theorem that $-\log Q_U(x) = K(x)$ up
to a constant additive term it remains to prove 
$-\log Q_U^d (x) = K^d(x)$ up to a small 
additive term. The last equality
is a major open problem in Kolmogorov complexity theory,
see \cite{LiVi} Exercises 7.6.3 and 7.6.4.

\subsection{Results}
All computations below vary by the choice
of reference optimal prefix machine.
We prove that there is an infinite sequence of strings, 
$x_1,x_2, \ldots$ with $|x_{m+1}|=|x_m|+1$, 
such that for every $m>0$ each $x_m$ is computed by 
$b_1^m,b_2^m$-incompressible programs 
in $d_1^m,d_2^m$ steps, respectively, 
with $b_2^m=b_1^m+1$ 
while $d_1^m-d_2^m$ rises faster than any computable function
but not faster than the Busy Beaver function
the first incomputable function \cite{Ra62} 
(Theorem~\ref{theo.3} and Corollary~\ref{cor.2}). 
We call $b_j^m$ the ``significance'' of ``logical
depth'' $d_j^m$ of string $x_m$ ($j =1,2$).
We prove next (Theorem~\ref{theo.BB}) for this 
infinite sequence of strings that 
the running times of shortest programs associated with
this sequence 
rise faster than any computable function but again not
so fast as the Busy Beaver function. We also give a new proof
(provided by a referee) of a new result 
showing the closeness of the two versions of logical depth
(Theorem~\ref{Teo1}).

The rest of the paper is organized as follows: in Section~\ref{sect.2}, 
we introduce some notation, definitions and basic results needed 
for the paper. 
In Section~\ref{instabilidade}, we prove the results mentioned.

\section{Preliminaries}\label{sect.2}
We use {\em string} or {\em program} to mean a finite binary string. The
alphabet is $\Sigma=\{0,1\}$, and $\Sigma^*= \{0, 1\}^*$ is the 
set of all strings. Strings are denoted by the 
letters $x$, $y$ and $z$. The {\em length} of a 
string $x$ (the number of occurrences of bits in it) is
denoted by $|x|$, and the {\em empty} string by $\epsilon$. Thus,
$|\epsilon|=0$.
We use the notation $\Sigma^n$ for the set of strings of 
length $n$.
We use the binary logarithm which is denoted by ``$\log$.''
Given two functions $f$ and $g$, we say that $f \in O(g)$ if there is a constant $c >0$, 
such that $f(n) \leq c \cdot g(n)$, 
for all but finitely many natural numbers $n$. 

\subsection{Time Bounds}
Often the resource-bounds are 
{\em time constructible}. There are many definitions. For example,
there is a Turing machine whose running time is exactly $t(n)$ 
on every input of size $n$, for some function $t: \nn \to \nn$,
where $\nn$ is the set of natural numbers. However, 
in this paper there are functions that 
are not time constructible. An example
is the Busy Beaver function $BB: \nn \to \nn$ (Definition~\ref{bbfunction}) 
which is not computable
(it rises faster than any computable function). 
In this case, and when we just mean a number of steps, we 
indicate in the superscript the number of steps taken, usually
using $d$.
 Given a program $p$, we denote its running time (the number of
steps taken by the reference optimal prefix Turing machine defined
below) by $time(p)$. 

\subsection{Computability}
A pair of nonnegative integers,
such as $(p,q)$ can be interpreted as the rational $p/q$.
We assume the notion of a computable function with rational arguments
and values.
A function $f(x)$ with $x$ rational is \emph{semicomputable from below}
if it is defined by a rational-valued total computable function $\phi(x,k)$
 with $x$ a rational number
and $k$ a nonnegative integer
such that $\phi(x,k+1) \geq \phi(x,k)$ for every $k$ and
  $\lim_{k \rightarrow \infty} \phi (x,k)=f(x)$.
This means
that $f$ (with possibly real values)
can be computed in the limit from below
(see \cite{LiVi}, p. 35).  A function $f$ is  \emph{semicomputable
from above} if $-f$ is semicomputable from below.
 If a function is both semicomputable from below
and semicomputable from above then it is \emph{computable}.

\subsection{Kolmogorov Complexity}

We refer the reader to the textbook
\cite{LiVi} for details, notions, and history.
We use Turing machines with a read-only one-way input tape, one or more
(a finite number) of work tapes at which the computation takes place,
and a one-way write-only output tape. All tapes are semi-infinite,
divided into squares, and each square can contain a symbol 
from a given alphabet or blanks.
The machine uses for all of its
tapes a finite alphabet and all tapes are one-way infinite. 
Initially, the input tape is inscribed with a 
semi-infinite sequence of 0's and 1's.
The other tapes are empty (contain only blanks).
At the start, all tape heads scan 
the leftmost square on their tape. If the machine
halts for a certain input then the contents of the scanned segment
of input tape is called the {\em program} or {\em input}, and the contents of
the output tape is called the {\em output}. The machine thus described
is a {\em prefix Turing machine}. Denote it by $T$.
If $T$ terminates, then the program is $p$
and the output is $T(p)$. The set 
${\cal P}=\{p:T(p)<\infty \}$ is {\em prefix-free} (no element of the set
is a proper prefix of another element). By the ubiquitous Kraft inequality
\cite{Kr49} we have 
\begin{equation}\label{eq.kraft}
\sum_{p \in {\cal P}} 2^{-|p|} \leq 1.   
\end{equation}
The same holds for a fixed conditional or auxiliary. 
The above unconditional case 
corresponds to the case where the conditional is $\epsilon$.
Among the universal prefix-free Turing machines we consider a special
subclass called {\em optimal}, see Definition 2.0.1 in \cite{LiVi}.
To illustrate this concept: let $T_1,T_2, \ldots$ be a standard enumeration of (prefix)
Turing machines, and let $U_1$ be one of them.  If $U_1 (i,pp) = T_i(p)$
for every index $i$ and program $p$ and outputs 0 for inputs that are
not of the form $pp$ (doubling of $p$), 
then $U_1$ is also universal. However,
$U_1$ can not be used to define Kolmogorov complexity. For that we need 
a machine $U_2$ such that $U_2(i,p)=T_i(p)$ for every $i,p$. 
A machine such as $U_2$ is called an {\em optimal} prefix Turing machine.
Optimal prefix Turing machines are a strict subclass of 
universal prefix Turing machines. The above
example illustrates the strictness. To define Kolmogorov
complexity we require optimal prefix Turing machines
and not just universal prefix Turing machines. 
The term `optimal' comes from the founding paper \cite{Ko65}.

It is still possible that two different optimal prefix Turing machines 
have different computation times for the same input-output pairs or
even different sets of programs. To avoid these problems we
fix a reference machine. Necessarily, the reference machine has a 
certain number of worktapes.
A well-known result of \cite{HeSt66} states that $n$ steps of 
a multiworktape
prefix Turing machine can be simulated in $O(n \log n)$ steps
of a two-worktape prefix Turing machine. Thus,
for  such a simulating optimal Turing machine $U'$ we have
$U'(i,p)=T_i(p)$ for all $i,p$; if $T_i(p)$ terminates in time $t(n)$
then $U'(i,p)$ terminates in time $O(t(n)\log t(n))$. 
Altogether, we fix such an simulating (as above)
optimal prefix Turing machine $U'$ and call it the {\em reference
optimal prefix Turing machine} $U$.

\begin{definition}
Let $U$ be the reference optimal prefix-free Turing machine, 
and $x,y$ be strings. 
The \em prefix-free Kolmogorov complexity \em 
$K(x|y)$ of $x$ given $y$ is defined by 
$$
  K(x|y)=\min \{|p| : U(p,y)= x \}.
$$
The notation $U^d(p,y)=x$ means that $U(p,y)=x$ within
$d$ steps.
The \em $d$-time-bounded
prefix-free Kolmogorov complexity \em $K^{d}(x|y)$ of $x$ given $y$ is
defined by
$$K^{d}(x|y) =\min \{|p| : U^d(p,y)=x \}.$$
\end{definition}

The default value for the auxiliary input $y$ for the program $p$, 
is the empty string $\epsilon$. To avoid overloaded notation we usually 
drop this argument in case it is there. 
Let $x$ be a string. Denote by $x^*$ the first shortest program in standard
enumeration such that $U(x^*)=x$.

\begin{definition}
Let $|x|=n$.
The string $x$ is \em $c$-incompressible \em if \[K(x)\geq n+K(n)-c.\]
The string $x$ given $n$ is \em $c$-incompressible \em if \[K(x|n)\geq n-c.\]
\end{definition}

A simple counting argument can show the existence of $c$-incompressible strings
($c$-incompressible strings given their length)
of every length for the plain complexity $C(x)$. 
Since $K(x) \geq C(x)$ ($K(x|n \geq C(x|n)$ we have the following:
\begin{lemma}\label{lemmai}
There are at least $2^n(1-2^{-c})+1$ strings $x\in \Sigma^n$ (given $n$)
that are $c$-incompressible with respect to prefix Kolmogorov complexity.
\end{lemma}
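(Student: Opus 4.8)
\emph{Proof idea.} The plan is a direct counting argument on the compressible strings, in the spirit of the remark preceding the statement. Fix $n$ and regard it as the auxiliary input to the reference machine $U$. First I would count the programs $p$ with $|p|<n-c$: there are exactly $\sum_{i=0}^{n-c-1}2^i=2^{n-c}-1$ such strings. Each of them, run as $U(p,n)$, either diverges or produces exactly one output; hence the set $\{x:K(x|n)<n-c\}$ has cardinality at most $2^{n-c}-1$. Every $c$-compressible string of length $n$ lies in this set, since by definition it is computed (with auxiliary $n$) by some program of length less than $n-c$.

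Next I would subtract from the total. Since $|\Sigma^n|=2^n$, the number of $x\in\Sigma^n$ with $K(x|n)\ge n-c$ is at least $2^n-(2^{n-c}-1)=2^n-2^{n-c}+1=2^n(1-2^{-c})+1$, which is precisely the claimed bound. One should note that the estimate is only informative when $c\le n$; for $c>n$ every string of length $n$ is trivially $c$-incompressible given $n$ because $K(x|n)\ge 0$ always holds.

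An equivalent route, matching the sentence just before the lemma, is to first carry out the identical count for the plain complexity $C$, obtaining at least $2^n(1-2^{-c})+1$ strings $x\in\Sigma^n$ with $C(x|n)\ge n-c$, and then invoke $K(x|n)\ge C(x|n)$: each such string also satisfies $K(x|n)\ge n-c$. Either way the lemma follows, and the same bookkeeping (replacing "given $n$" by the unconditional set-up and using programs of length below $n+K(n)-c$ together with the appropriate count) yields the corresponding count for the unconditional notion of $c$-incompressibility.

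There is no genuine obstacle here; the only points requiring a little care are to count programs of length \emph{strictly} less than $n-c$ (so that the geometric sum is $2^{n-c}-1$ rather than $2^{n-c+1}-1$), and to observe that prefix-freeness plays no role in the upper bound on the number of outputs — a fixed machine on a fixed auxiliary input has at most one output per program in any case. The prefix structure is needed only elsewhere, e.g.\ for the Kraft inequality~(\ref{eq.kraft}), not for this counting.
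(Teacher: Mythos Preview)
Your proposal is correct and matches the paper's own argument, which is not given as a formal proof block but is sketched in the sentence preceding the lemma: do the standard counting of programs of length below $n-c$ for the plain complexity $C$ and then invoke $K(x|n)\ge C(x|n)$. Your direct count for $K$ and your alternative $C$-route are both exactly this, so there is nothing to add.
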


\section{Logical Depth}
The logical depth \cite{ben88} consists of two versions.
One version is based on $Q_U(x)$, 
the so-called {\em a priori} probability \cite{LiVi} and its 
time-bounded version. Here $U^d(p)$ means that $U(p)$ terminates in
at most $d$ steps. 
\[
Q_U(x)=\sum_{U(p)=x}2^{-|p|}, \;\;\;\;\;Q^d_U(x)=\sum_{U^d(p)=x}2^{-|p|}
\]
For convenience we drop the subscript and consider $U$ as understood.
\begin{definition}\label{def.ld}
Let $x$ be a string, $b$ a nonnegative integer.
The logical depth, tentative version 1, 
of $x$ at significance level $\varepsilon= 2^{-b}$ is
\[
\ld{\varepsilon}^{(1)}(x) = \min \left\{d:\ds\frac{Q^d(x)}{Q(x)}\geq \varepsilon\right\}
\]
\end{definition}
Using a program that is longer than another program for output $x$
can shorten the computation time. Thus,
the $b$-significant logical depth of an object $x$ is defined 
as the minimal time 
the reference optimal prefix Turing machine needs 
to compute $x$ by a $b$-incompressibel
program (one that can be compressed by at most $b$ bits).
\begin{definition}\label{logicaldepth}\label{def.final}
Let $x$ be a string, $b$ a nonnegative integer. 
The logical depth, tentative version 2, of $x$ at significance level 
$b$, is: 
\[
\ld{b}^{(2)}(x) = \min \{time(p) : |p| \leq K(p) + b \wedge U(p) = x\} \,.
\] 
In this case we say that the string $x$ is {\em $(d,b)$-deep}.
\end{definition}
\begin{remark}
\rm
It is easy to see that $\ld{0}^{(2)}(x)$ is the least number of steps to compute $x$ 
from an incompressible program. For example, $x^*$ is known to be 
incompressible up to an additive constant \cite{LiVi}. 
Thus, $time(x^*) \geq \ld{0}^{(2)}(x)$. 
For higher $b$ the value of $\ld{b}^{(2)}(x)$
is monotonic nonincreasing until 
$$\ld{|x|-K(x)+O(1)}^{(2)}(x)=O(|x| \log |x|),$$
the $O(1)$ term represents a program to copy the literal representation
of $x$ in $O(|x| \log |x|)$ steps. 
It is the aim of this paper to study the 
properties the graph of $f_x$ can have. For example, if $x$ is random
(i.e., $|x|=n$ and $K(x) \geq n+K(n)$) then always $b=O(1)$ and always $d=O(n \log n)$.
These $x$'s, but not only these, are called {\em shallow}.
\end{remark}

Version (2) is stronger than version (1) in that in the version (2)
every individual
program at significance level $b$ must take at most $\ld{b}^{(2)}(x)$ steps
to compute $x$, while version (1) to be equivalent would 
require only that a weighted average of
all programs for $x$ require at most $\ld{2^{-b}}^{(1)}(x)$ steps. 
The quantitative difference between the two versions of logical
depth is small as the
following statement shows. Compare with
Theorem 7.7.1 together with Exercise 7.7.1 in \cite{LiVi}.
The anonymous referee provided a slightly different
statement and the new proof below.
\begin{theorem}\label{Teo1}
Let $x$ be $(d,b)$ deep. Then
\[
\frac{1}{2^{b+K(b)+O(1)}} \leq \frac{Q^d(x)}{Q(x)}
< \frac{1}{2^{b+1}}.
\]
\end{theorem}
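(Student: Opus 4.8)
The plan is to read off what ``$(d,b)$-deep'' says and then prove the two inequalities separately, using Levin's Coding Theorem \cite{lev74} in the form $2^{-K(x)}\le Q(x)\le 2^{-K(x)+O(1)}$ to translate freely between $-\log Q(x)$ and $K(x)$. By Definition~\ref{def.final}, $x$ being $(d,b)$-deep means $d=\ld{b}^{(2)}(x)$: there is a program $p$ with $U(p)=x$, $time(p)=d$ and $|p|\le K(p)+b$; and, by minimality of $d$, every program $q$ with $U(q)=x$ and $time(q)<d$ is compressible by more than $b$ bits, i.e.\ $K(q)<|q|-b$.

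For the lower bound it suffices to produce one program $q_0$ with $U(q_0)=x$, $time(q_0)\le d$ and $|q_0|\le K(x)+b+K(b)+O(1)$; then $Q^d(x)\ge 2^{-|q_0|}\ge 2^{-(b+K(b)+O(1))}2^{-K(x)}$, and the Coding Theorem converts the last factor into $\Omega(Q(x))$, giving the claim after absorbing constants. If the first shortest program $x^*$ halts within $d$ steps we are already done with $q_0=x^*$ (in fact then $Q^d(x)=\Theta(Q(x))$), so the case to handle is $time(x^*)>d$. There I would exploit that $d=time(p)$ is recoverable from the witness $p$, so that a shortest program for $x$ halting within $d$ steps exists and has length $K^d(x)\le|p|\le K(p)+b$; one then re-describes such a program economically as ``$x^*$ together with a self-delimiting code for $b$ and an index into the (enumerable, once an appropriate time bound is in hand) list of length-$\ell$ programs for $x$ that halt in time $\le d$'', taking $\ell$ minimal. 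The counting in this step is exactly where the summand $K(b)$ --- and nothing larger --- enters, and where one sees that the time-bounded Coding Theorem (the open problem mentioned in the introduction) is \emph{not} needed; making the bookkeeping come out at $K(x)+b+K(b)+O(1)$ is the delicate point of this direction.

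For the upper bound the driving observation is again minimality: every program $q$ with $U(q)=x$ and $time(q)<d$ satisfies $K(q)<|q|-b$, hence its shortest program $q^*$ has $|q^*|\le|q|-b-1$; prefixing the fixed constant-length routine that runs $U$ twice turns $q^*$ into a program $r_q$ with $U(r_q)=x$ and $|r_q|\le|q|-b-1+O(1)$, and $q\mapsto r_q$ is injective because $r_q$ determines $q^*$ and hence $q=U(q^*)$. Summing over these $q$ and using $\sum_{U(r)=x}2^{-|r|}=Q(x)$ already gives $Q^{<d}(x)<2^{O(1)-b}Q(x)$. Two things then have to be improved to reach the stated $2^{-(b+1)}$: first, the contribution of the programs halting in \emph{exactly} $d$ steps --- these include the witness $p$, which is incompressible, so the injection above does not account for it --- must be shown negligible, presumably by arguing that a $b$-incompressible program for $x$ with $time=d$ is forced to have length $>K(x)+b-O(1)$; second, the constant $O(1)$ lost to the doubling routine must be recovered, which will need either a sharper injection (e.g.\ exploiting that the reference machine is optimal, not merely universal) or a finer accounting of which $r_q$ are themselves slow. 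I expect this tightening --- from $2^{-b+O(1)}$ down to the exact $2^{-(b+1)}$, together with the $time=d$ boundary --- to be the main obstacle.
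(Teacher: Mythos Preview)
Your upper-bound argument is the paper's: both compress every fast program via the twice-iterated decompressor and compare the resulting mass to $Q(x)$. The paper phrases it contrapositively (``if all $d$-step programs were $c$-compressible then the doubled machine would output $x$ with probability $2^cQ^d(x)$, forcing $c<b+1$'') and simply does not track the simulation constant you worry about; your concern that the argument as written yields only $2^{-b+O(1)}$ rather than exactly $2^{-(b+1)}$ is legitimate, and the paper does not resolve it either, nor does it separate out the $time=d$ boundary.

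Your lower-bound plan, however, is genuinely different from the paper's and has a gap. You want to exhibit a single program $q_0$ with $U(q_0)=x$, $time(q_0)\le d$, $|q_0|\le K(x)+b+K(b)+O(1)$, described as ``$x^*$, a self-delimiting code for $b$, and an index into the list of length-$\ell$ programs for $x$ halting in time $\le d$''. The obstruction is that this list cannot be generated from $x^*$ and $b$: you need the time bound $d$, and $d=\ld_b(x)$ is not computable from $(x,b)$. The witness $p$ determines $d$, but you cannot afford $p$ in the description. Even waiving that, the chain $K^d(x)\le|p|\le K(p)+b$ only ties the target length to $K(p)$, and there is no reason $K(p)\le K(x)+K(b)+O(1)$; so the ``index'' has no $b+K(b)$ bound.

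The paper avoids naming any specific fast program. It assumes $Q^d(x)<2^{-B}Q(x)$ and builds a lower-semicomputable semimeasure on \emph{programs}: for every string $y$, enumerate the programs for $y$ in order of halting and assign each $p$ weight $2^{-|p|+B}$, stopping just before the running per-$y$ total would exceed $Q(y)$ (postponing as needed since $Q(y)$ is only lower-semicomputable). This construction is uniform in $B$ and never refers to $d$; the assumption guarantees that every program computing $x$ within $d$ steps still receives the full weight $2^{-|p|+B}$. The Coding Theorem then yields $K(p\mid B)\le|p|-B+O(1)$, hence $K(p)\le|p|-(B-K(B)-O(1))$, for every such $p$. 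Choosing $B=b+K(b)+c$ with $c$ a large constant makes $B-K(B)-O(1)>b$ (using $K(b+K(b))\le K(b)+O(1)$), contradicting the $b$-incompressibility of the witness. The idea you are missing is this indirect route: turn ``small total fast mass'' into ``every fast program is compressible'' via a semimeasure, rather than trying to locate one short fast program explicitly.
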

\begin{proof}
(Right $\leq$) This follows from: {\em If for integers $d,b>0$ the total
a priori probability $Q^d(x)$ of all programs that compute $x$ 
within $d$ steps
is at least $2^{-b-1} Q(x)$, then one of these programs is 
$b+1$-incompressible}. Indeed, if for some $c$
all programs computing $x$ within $d$ steps are $c$-compressible, 
then the twice iterated
reference optimal Turing machine (in its role as decompressor) computes
$x$ with probability $2^c Q^d(x) \geq 2^{c-b-1}Q(x)$ 
from the $c$-compressed versions.
But $Q(x) \geq 2^{c-b-1}Q(x)+2^{-b-1} Q(x)$.
Hence $c-b-1 < 0$  that is $c < b+1$. Hence there is a program computing
$x$ within $d$ steps that is $b+1$-incompressible. Then
$x$ may be $(d,b+1)$-deep contradicting the assumption that
$x$ is $(d,b)$-deep.
Hence $Q^d(x) < 2^{-b-1} Q(x)$.

(Left $\leq$) This follows from: {\em If for integers $d,b$ there exists a 
$b$-incompressible program that computes $x$ in time $d$, then 
$Q^d(x) \geq 2^{-b-K(b)-O(1)} Q(x)$}. 
Assume by way of contradiction 
that $Q^d(x) < 2^{-B} Q(x)$ for some $B$ (see below about the choice
of $B$). 
Consider the following lower semicomputable
semiprobability (the total probability is less than 1):
for each string $x$ we enumerate
all programs $p$ that compute $x$ in order of halting (time), and
assign to each halting $p$ the probability $2^{-|p|+B}$ until
the total probability would pass $Q(x)$ with the next halting $p$.
Since $Q(x)$ is lower semicomputable we can postpone 
assigning probabilities.
But eventually or never for some $p$
the total probability may pass $Q(x)$ and this $p$ and all subsequent
halting $p$'s for $x$ get assigned probability 0.   
Therefore, the total probability assigned to all halting programs
for $x$ is less than $Q(x)$. Since by assumption $Q^d(x) < 2^{-B} Q(x)$
we have
$2^B Q^d(x) = \sum_{U^d(p)=x} 2^{-|p|+B} < Q(x)$. Since $Q(x) < 1$ we
have $Q^d(x) < 2^{-B}$ and therefore all programs that
compute $x$ in at most $d$ steps are $(B-O(1))$-compressible
given $B$, and therefore $(B-K(B)-O(1))$-compressible.

By assumption there exists a $b$-incompressible 
program from which $x$ can be computed in $d$ steps. We obtain
a contradiction with this for $B-K(B)-O(1)  \geq b$. This is the case if 
we set $B = b+K(b)+c$ for a large enough constant $c>0$.
Namely $B-K(B)-O(1)=
b+K(b)+c-K(b+K(b)+c)-O(1) \geq b+K(b)+c-K(b+K(b))-K(c)-O(1)>b$
(note that $K(b+K(b)) \leq K(b,K(b))+O(1) \leq K(b)+O(1)$
by some easy argument \cite{LiVi} and $K(c) = O(\log c) < c/2$
for large $c$). Therefore, $Q^d(x) < 2^{-B} Q(x)$ is contradicted.
Hence $Q^d(x) \geq 2^{-B} Q(x)$ which implies
$Q^d(x) \geq 2^{-b-K(b)-O(1)} Q(x)$. 
By assumption $x$ is $(d,b)$-deep satisfying the condition of 
this case.
\end{proof}

\begin{remark}
\rm
We can replace $K(b)$ by $K(d)$ by changing the constriction of the
semiprobability: knowing $d$ we generate all programs that compute 
$x$ within $d$ steps and let the semiprobabilities be propertional to 
$2^{-|p|}$ and the sum be at most $Q(x)$. In this way $K(b)$ 
in Theorem~\ref{Teo1} is substituted by $\min\{K(b),K(d)\}$. 
\end{remark}
Theorem~\ref{Teo1} shows that the quantitative difference between
the two versions of logical depth
are small. 
We choose version 2 as our final definition of logical depth.


\begin{definition}\label{bbfunction}
The \em Busy Beaver function \em $BB: \nn \rightarrow \nn$ is defined by
\[
BB(n) = \max_{p:|p|\leq n}\{\mbox{\rm running time of $U(p) < \infty$}\}
\]
\end{definition}

\section{The graph of logical depth}\label{instabilidade}

Even slight changes of the 
significance level $b$ can cause large changes
in logical depth.

\begin{lemma}\label{lemma1}
Let $n$ be large enough. There exist strings $x$ of length $n$ 
such that the running time of a computation from a shortest program
to $x$ is incomputable.
\end{lemma}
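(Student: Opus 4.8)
The plan is to exhibit, for each large $n$, a string $x$ of length $n$ whose shortest program $x^*$ has running time $time(x^*)$ that cannot be computed from $n$ (hence is incomputable in any uniform sense). The natural candidate is to encode into $x$ a large amount of halting information. Concretely, I would fix $n$ and consider the "halting string" $h_n \in \Sigma^{n}$ whose $i$-th bit records whether the $i$-th program (of length at most roughly $\log n$, say) halts, taking exactly as many indices as fit into $n$ bits. Such a string is known to be incompressible up to an additive constant, $K(h_n \mid n) \geq n - O(1)$, so a shortest program for $h_n$ is essentially $h_n^*$ itself. Alternatively, and more cleanly, I would just take $x$ to be a string that is $c$-incompressible given $n$ (these exist by Lemma~\ref{lemmai} for suitable $c$) and argue about the running times of shortest programs for a well-chosen \emph{sequence} of such strings.

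The key steps, in order, are: (i) Suppose toward a contradiction that there is a computable function $g$ with $time(x^*) = g(n)$, or more carefully $time(x^*) \le g(n)$, for all $n$ and all length-$n$ incompressible $x$ of the relevant kind; (ii) Given $g$, note that from $n$ we can compute the bound $g(n)$ on the running time, then dovetail the reference machine $U$ on all programs of length $\le n + O(1)$ for $g(n)$ steps and thereby recover, for every string $y$ of length $n$, whether $K(y) \le n + O(1)$ together with a shortest program achieving it; (iii) Show this lets us compute, from $n$ alone, something that encodes the halting behaviour of all short programs — e.g.\ the lexicographically first $c$-incompressible string of length $n$, or the halting pattern $h_n$ above — because knowing the running time of shortest programs collapses the distinction between $K$ and its time-bounded version $K^{g(n)}$ on strings of length $n$; (iv) Derive a contradiction: $h_n$ has complexity $\ge n - O(1)$ given $n$, yet we have just computed it from $n$ in $O(1)$ bits of extra information, which is impossible for large $n$. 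A variant: if $time(x^*)$ were computable for every length-$n$ incompressible $x$, then the Busy Beaver value $BB(n-O(1))$ — the maximum halting time over programs of length $\le n - O(1)$ — would be computable, since some incompressible string of length near $n$ is the output of a program realizing that maximal time (or a program simulating all such programs), contradicting the incomputability of $BB$ recalled in Definition~\ref{bbfunction}.

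The main obstacle I anticipate is the bookkeeping in step (iii): one must be careful that "the running time of a shortest program for $x$" being computable genuinely lets you decide halting for \emph{all} relevant short programs, not just for the particular $x$ you picked. The clean way around this is to choose $x$ so that $x^*$ itself must, in the course of its computation, effectively wait out the slowest-halting short program — for instance by letting $x$ be $h_n$, so that any program producing $h_n$ must determine the halting status of each indexed machine and hence runs for at least the corresponding halting time; then $time(x^*) \ge BB(\ell_n)$ for the appropriate length $\ell_n$ just below $\log n$ or below $n$ depending on the encoding, and computability of $time(x^*)$ as a function of $n$ would make $BB$ computable along an infinite set of arguments, which already contradicts its incomputability. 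Making "along an infinite set of arguments" into "rises faster than any computable function fails" requires only that the chosen lengths $\ell_n$ be unbounded and computable from $n$, which holds by construction. A secondary, purely cosmetic obstacle is pinning down the additive constants ("$+O(1)$", "$+K(n)$") consistently between the two incompressibility notions in Definition~2 and Lemma~\ref{lemmai}; these do not affect the argument but must be stated carefully so the final complexity estimate in step (iv) is strictly violated for all large $n$.
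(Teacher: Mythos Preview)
Your plan is workable but takes a different route from the paper. The paper's proof is essentially a two-line application of G\'acs's theorem that $K(K(x)\mid x)\ge \log n - 2\log\log n - O(1)$ for suitable $x$ of length $n$: if the running time $\phi(x)$ of a shortest program were computable from $x$, then from $x$ one could run all programs of length $\le n+O(\log n)$ for $\phi(x)$ steps, pick the shortest one that outputs $x$, and read off $K(x)$; this would force $K(K(x)\mid x)=O(1)$, contradicting G\'acs. Your lex-first-incompressible-string variant in step~(iii) reaches the same contradiction (``if $\phi$ is computable then $K$ is computable on $\Sigma^n$, hence the first incompressible string of length $n$ has complexity $O(1)$ given $n$'') and is in fact more elementary, needing only the counting Lemma~\ref{lemmai} rather than G\'acs's result. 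What the paper's route buys is brevity and a clean pointer to the literature; what yours buys is self-containedness.

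Your preferred line through the halting string $h_n$ and the Busy Beaver function, however, has a real gap exactly where you flag an obstacle. You claim that a shortest program $h_n^*$ ``must, in the course of its computation, effectively wait out the slowest-halting short program,'' giving $time(h_n^*)\ge BB(\ell_n)$. This is not justified: a program can output $h_n$ without simulating the encoded machines at all, provided the bits of $h_n$ are available to it by other means, and nothing forces the \emph{shortest} such program to adopt the simulate-and-wait strategy. (The standard short description---``given the number $k$ of halters, dovetail until $k$ have halted''---does run for time at least $BB(\ell_n)$, but you would then need to argue that no comparably short program is faster, which is essentially what you are trying to prove.) I would drop the $h_n$/$BB$ detour and keep the lex-first-incompressible argument, taking care to state the contradiction hypothesis as ``$\phi(x)$ is computable for \emph{all} $x$'' rather than only for incompressible $x$: only then is $g(n)=\max_{|x|=n}\phi(x)$ a bound for \emph{every} length-$n$ string, so that $K^{g(n)}=K$ on all of $\Sigma^n$ and the lex-first search genuinely identifies an incompressible string.
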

\begin{proof}
By \cite{Ga74} we have $K(K(x)|x) \geq \log n - 2 \log \log n-O(1)$.
(This was improved to the optimal $K(K(x)|x) \geq \log n -O(1)$ recently
in \cite{BS03}.) Hence there is no computable function $\phi(x) =
\min\{d: U^d(p)=x ,\; |p|=K(x)\}$. If there were, then we could run
$U$ for $d$ steps on any program of length $n+O(\log n)$. Among the 
programs which halt within $d$ steps we select the ones which output $x$.
Subsequently, we select from this set a program of minimum length.
This is a shortest program for $x$ of length $K(x)$. Therefore,
the assumption that $\phi$ is computable implies that $K(K(x)|x = O(1)$:
contradiction.  
\end{proof}
The following result was mentioned informally in \cite{ben88}.
\begin{lemma}\label{lemma2}
The running time of a program $p$ 
is at most $BB(|p|+O(\log |p|))$. 
The running time of a shortest program for a string $x$ of length $n$
is at most $BB(n+O(\log n))$.
\end{lemma}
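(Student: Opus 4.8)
The plan is to prove Lemma~\ref{lemma2} directly from the definition of the Busy Beaver function (Definition~\ref{bbfunction}), the only subtlety being the $O(\log)$ slack in the argument. First I would recall that $BB(n)$ is the maximum running time among all halting computations $U(p)$ with $|p| \le n$. So if $p$ is any program on which $U$ halts, then trivially the running time of $U(p)$ is at most $BB(|p|)$, which is already $\le BB(|p| + O(\log |p|))$ by monotonicity of $BB$. At first sight this seems to give the statement with no logarithmic loss at all, so I would want to understand where the $O(\log |p|)$ is really needed.

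The $O(\log|p|)$ term is there to absorb the overhead of the reference machine $U$. Recall from the Preliminaries that $U$ is obtained from a straightforward universal prefix machine $U'$ by the Hennie--Stearns two-worktape simulation, so that a $T_i$-computation of length $t$ becomes a $U$-computation of length $O(t \log t)$, and that $U$ takes inputs of the form $\langle i, p\rangle$ whose length is $|p| + O(\log i)$. Thus for the second assertion: let $p$ be a shortest program for $x$, so $|p| = K(x) \le n + K(n) + O(1) = n + O(\log n)$ (using $K(n) \le \log n + O(\log\log n)$, or simply $K(n) = O(\log n)$). Since $U$ halts on $p$ and $|p| \le n + O(\log n)$, by Definition~\ref{bbfunction} the running time of $U(p)$ is at most $BB(|p|) \le BB(n + O(\log n))$, where the last inequality uses that $BB$ is nondecreasing. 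For the first assertion, the running time of $U(p)$ is at most $BB(|p|) \le BB(|p| + O(\log|p|))$ by the same monotonicity, the extra $O(\log|p|)$ being a harmless (and in fact unnecessary, but convenient) slack that makes the statement robust under the encoding conventions and the $U'$-to-$U$ passage.

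The key steps, in order, are: (i) unwind Definition~\ref{bbfunction} to note that any halting $U(p)$ runs for at most $BB(|p|)$ steps; (ii) observe that $BB$ is nondecreasing, so adding the $O(\log|p|)$ term only weakens the bound; (iii) for the second claim, bound the length of a shortest program for $x$ of length $n$ by $K(x) \le n + K(n) + O(1) \le n + O(\log n)$ and feed this into (i)--(ii). I do not expect a genuine obstacle here; the only point requiring a line of care is making sure the relevant program's length is counted in the same units that $BB$ is defined on — i.e.\ that we measure $|p|$ as the length of the actual input string to the reference machine $U$, including whatever self-delimiting encoding $U$ uses, which is why the informal $O(\log|p|)$ buffer is stated rather than a bare $BB(|p|)$. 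Once that bookkeeping is fixed, both assertions are immediate.
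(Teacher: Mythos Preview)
Your proof is correct and follows essentially the same approach as the paper: the first assertion is immediate from the definition of $BB$ (the paper says only that it ``follows from Definition~\ref{bbfunction}''), and the second follows from the bound $K(x)\le n+O(\log n)$, which the paper obtains by exhibiting the explicit self-delimiting code $x'=\overline{|x|}x$ of length $|x|+O(\log|x|)$ rather than via $K(x)\le n+K(n)+O(1)$. Your digression about the Hennie--Stearns overhead as the reason for the $O(\log|p|)$ term is extra speculation not present in the paper (which simply states the slack without explanation), but it does no harm.
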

\begin{proof}
The first statement of the lemma follows from 
Definition~\ref{bbfunction}. For the second statement we use
the notion of a simple prefix-code called a {\em self-delimiting} code. 
This is obtained by reserving one
symbol, say 0, as a stop sign and encoding a string $x$ as $1^x 0$.
We can prefix an object with its length and iterate
this idea to obtain ever shorter codes:
$\bar{x}= 1^{|x|} 0 x$ with
length $|\bar{x}| = 2|x| + 1$, and 
$x'= \overline{|x|} x$ of length $|x|+2||x||+1=
|x|+O(\log |x|)$ bits. From this code $x$ is readily extracted.
The second statement follows since $K(x) \leq |x|+O(\log |x|)$. 
\end{proof}

\begin{theorem}\label{theo.3}
Let $n$ be large enough.
There is a string $x$ of every length $n$ such 
that $\ld{i}(x)$ is incomputable and
$\ld{i+1}(x)$ is computable for some $i$ 
that satisfies $0 \leq i \leq n+O(\log n)$.
\end{theorem}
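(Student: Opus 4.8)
The plan is to fix a large $n$, to choose the witness $x$ of length $n$ among the ``deep'' strings produced by Lemma~\ref{lemma1}, and then to walk along the graph $b\mapsto\ld{b}(x)$ from $b=0$ towards the copy threshold, observing that it begins ``incomputably large'' and ends ``computably small'', so that at some consecutive pair $(i,i+1)$ the status must flip.

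First I would recall the basic shape of the graph from the remark after Definition~\ref{def.final}: $\ld{\cdot}(x)$ is a nonincreasing integer function of $b$, and it has already reached $O(n\log n)$ by $b=N:=n-K(x)+O(1)$, witnessed by the literal ``$\mathrm{print}(x)$'' program of length $n+O(\log n)$. Since $K(x)\ge 0$, we have $N\le n+O(\log n)$, so the whole relevant range of $b$ sits inside the interval $[0,n+O(\log n)]$ demanded by the statement, and $\ld{N}(x)$ is a ``small'', computably bounded quantity. At the other end, by Lemma~\ref{lemma1} there are strings $x$ of length $n$ for which the running time of a shortest program is incomputable; moreover the argument in the proof of Lemma~\ref{lemma1} localises this: if $\ld{i_0}(x)$ could be computed from $x$ for $i_0$ the additive constant by which a first shortest program $x^*$ can be compressed (so that $x^*$ is $i_0$-incompressible and hence the minimum defining $\ld{i_0}(x)$ ranges over a set containing $x^*$), then running all programs of length $\le n+O(\log n)$ for $\ld{i_0}(x)$ steps and extracting a shortest one would compute $K(x)$ from $x$, contradicting $K(K(x)\mid x)\ge\log n-2\log\log n-O(1)$ of \cite{Ga74}. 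Lemma~\ref{lemmai} ensures that many $x$ of each length $n$ are available as witnesses, which is what justifies the phrase ``a string $x$ of every length $n$''.

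The conclusion then follows by a discrete intermediate value argument. Label each $b\in\{i_0,i_0+1,\dots,N\}$ by ``incomputable'' or ``computable'' according to whether $\ld{b}(x)$ can be computed from $x$. By the previous paragraph the label at $b=i_0$ is ``incomputable'' and the label at $b=N$ is ``computable''; no monotonicity of the labelling in $b$ is needed, since any finite word over $\{\text{I},\text{C}\}$ that starts with I and ends with C contains an adjacent occurrence of I followed by C. Taking $i$ to be the position of such an occurrence gives $i_0\le i<N$, hence $0\le i\le n+O(\log n)$, with $\ld{i}(x)$ incomputable and $\ld{i+1}(x)$ computable.

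The main obstacle I expect is making the two labels genuinely rigorous rather than borrowing the informal ``incomputable'' of Lemma~\ref{lemma1}. On the incomputable side one must check that passing from ``programs of length exactly $K(x)$'' to ``$i_0$-incompressible programs'' (which may be longer and faster, and so could a priori lower $\ld{i_0}(x)$ below the running time of every shortest program) does not break the reduction to computing $K(x)$ --- it does not, because knowing any upper bound on $\ld{i_0}(x)$ still tells us how long to simulate in order to be sure a shortest program has appeared. On the computable side the subtlety is symmetric: one wants the exact value $\ld{N}(x)$, not merely an $O(n\log n)$ bound, to be computable, so one should take $b=N$ large enough that a witnessing program can be chosen of length $\le b$ (making its incompressibility automatic) and the minimisation a finite effective search; alternatively one may read ``computable'' in the weaker sense --- as in Lemma~\ref{lemma1} --- of being provably $O(n\log n)$, which already suffices for the contrast with the incomputably large lower end. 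Tracking the $O(1)$ and $O(\log n)$ terms so that the resulting $i$ lands in $[0,n+O(\log n)]$ is then routine.
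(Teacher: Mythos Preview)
Your approach is essentially the paper's: pick $x$ via Lemma~\ref{lemma1} so that the depth at the low end of the significance range is incomputable, observe that at $b=n-K(x)+O(\log n)$ the literal print program forces $\ld{b}(x)=O(n\log n)$ and hence computable, and then run a discrete intermediate-value argument along the finite range of $b$. The paper simply starts at $b=0$ rather than your $b=i_0$, and it does not pause over the subtleties you flag (what ``incomputable'' means for a single value, or whether the computable endpoint yields the exact depth rather than just a bound); on those points you are more careful than the original. One minor misstep: your appeal to Lemma~\ref{lemmai} is off, since that lemma supplies \emph{incompressible} and hence shallow strings, the opposite of what is wanted here --- Lemma~\ref{lemma1} alone furnishes a witness at each length, and that is all you need.
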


\begin{proof}
Let $x$ be a string of length $n$ as in Lemma~\ref{lemma1}. 
Then $\ld{0}(x)$ is incomputable.
However, $\ld{n-K(x)+O(\log n)}(x)= O(n \log n)$ and therefore
computable. Namely,
a self-delimiting encoding
of $x$ can be done in $n+O(\log n)$ bits. Let $q$ be such an encoding
with $q=1^{||x||}0|x|x$ (where $||x||$ is the length of $|x|$).
Let $r$ be a self-delimiting program of $O(1)$ bits which prints the
encoded string in the next self-delimiting string. Consider the program $rq$.
Since $x$ can be compressed to length $K(x)$, 
the running time  $\ld{n-K(x)+O(\log n)}(x)$ is at most
the running time of $rq$ which is $O(n \log n)$.

Consider the sequence $\ld{0}(x), \ldots , \ld{n-K(x)+O(\log n)}(x)$.
Since $\ld{0}(x)$ is incomputable and $\ld{n-K(x)+O(\log n)}(x)$ is
computable, there must be an $i$ satisfying $0 \leq i < n-K(x)+O(\log n)$
such that $\ld{i}(x)$ is incomputable and
$\ld{i+1}(x)$ is computable. Let set of all such $i$'s is finite
and discrete.
Hence it has a maximum.
\end{proof}

Let $x$ be as in Theorem~\ref{theo.3} and
$i_{\max}$ be the $i$ which reaches the maximum 
of $\ld{i}(x)-\ld{i+1}(x)$, 
$0 \leq i \leq n+O(\log n)$.

\begin{corollary}\label{cor.2}
\rm
There exists an infinite sequence of strings $x_1, x_2, \ldots$ with 
$|x_{m+1}|=|x_m|+1$, each $x_m$ with an $i_{m,\max}$ ($0 \leq i < |x_m|-K(x_m)
+O(\log |x_m|)$), and $h(m)=\ld{i_{m, \max}}(x_m)-\ld{i_{m+1, \max}}(x_m)$ is
the maximal gap in the logical depths of which 
the significance differs by 1.
The function $h(m)$ rises faster than any computable function but is not
faster than $BB(|x_m|+O(\log |x_m|)$ by Lemma~\ref{lemma2}.
\end{corollary}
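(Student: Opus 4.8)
The plan is to obtain the sequence by applying Theorem~\ref{theo.3} at every length, to bound $h$ from above via Lemma~\ref{lemma2}, and to bound $h$ from below by a telescoping argument that reduces the growth of $h(m)$ to that of $\ld{0}(x_m)$. First I would fix a large enough $n_0$ and, for each $n\ge n_0$, pick a string $x^{(n)}\in\Sigma^n$ as in Theorem~\ref{theo.3}; concretely one may take $x^{(n)}$ to be a genuinely deep string --- for instance a prefix of the characteristic sequence of the halting problem, whose shortest program has length $K(x^{(n)})=O(\log n)$ but runs for a number of steps growing faster than any computable function of $n$, so that $\ld{0}(x^{(n)})$ is not bounded by any computable function of $n$, while $\ld{N_n}(x^{(n)})=O(n\log n)$ for $N_n:=n-K(x^{(n)})+O(\log n)\le n+O(\log n)$. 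Set $x_m:=x^{(n_0+m-1)}$, so $|x_{m+1}|=|x_m|+1$. The finite nonincreasing sequence $\ld{0}(x_m)\ge\ld{1}(x_m)\ge\dots\ge\ld{N_m}(x_m)$ (remark after Definition~\ref{def.final}) has a maximal one-step drop; let $i_{m,\max}\in[0,N_m)$ attain it and write $h(m)=\ld{i_{m,\max}}(x_m)-\ld{i_{m,\max}+1}(x_m)$.

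For the upper bound, monotonicity gives $h(m)\le\ld{i_{m,\max}}(x_m)\le\ld{0}(x_m)\le time(x_m^{*})$, the last step because $x_m^{*}$ is incompressible up to an additive constant and hence admissible in $\ld{0}(x_m)$; the second statement of Lemma~\ref{lemma2} then gives $time(x_m^{*})\le BB(|x_m|+O(\log|x_m|))$. For the lower bound, telescoping over $b=0,1,\dots,N_m$, using monotonicity and $\ld{N_m}(x_m)=O(|x_m|\log|x_m|)$, gives $N_m\, h(m)\ge\ld{0}(x_m)-\ld{N_m}(x_m)$, that is,
\[
h(m)\;\ge\;\frac{\ld{0}(x_m)-O(|x_m|\log|x_m|)}{|x_m|+O(\log|x_m|)}\,,
\]
so it is enough to know that $\ld{0}(x_m)$ is not bounded by any computable function of $|x_m|$. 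For the chosen deep strings this is immediate: any program for $x_m$ of length at most $K(x_m)+O(1)=O(\log|x_m|)$ --- in particular any witness admissible in $\ld{0}(x_m)$ --- must, in order to print the halting bits of all programs of index $\le|x_m|$, wait for the slowest of them to halt, hence runs for at least $BB(\Omega(\log|x_m|))$ steps, and $m\mapsto BB(\Omega(\log|x_m|))$ already dominates every computable function. (Alternatively, a computable upper bound on $\ld{0}(x_m)$ would, exactly as in Lemma~\ref{lemma1}, let one read off $K(x_m)$ from $x_m$ and thereby force $K(K(x_m)\mid x_m)=O(1)$, contradicting the G\'acs bound.) Since dividing by $|x_m|+O(\log|x_m|)$ and subtracting $O(|x_m|\log|x_m|)$ cannot stop a function from eventually dominating every computable function, $h(m)$ rises faster than any computable function, and together with the upper bound this proves the corollary.

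The step needing the most care is the one just used, namely that $\ld{0}(x_m)$ --- and not merely the function $x\mapsto\ld{0}(x)$ --- grows faster than every computable function: ``incomputable'' in the sense of Lemma~\ref{lemma1} and Theorem~\ref{theo.3} does not by itself force a super-computable growth rate, and the inequality $\ld{0}(x)\le time(x^{*})$ runs the wrong way, so the witnessing sequence must be exhibited explicitly as a sequence of bona fide deep strings rather than extracted abstractly from Theorem~\ref{theo.3}. Granting that, the telescoping identity, the appeal to Lemma~\ref{lemma2}, and the absorption of the ``up to a constant'' in ``incompressible'' into the $O(\cdot)$ terms are all routine.
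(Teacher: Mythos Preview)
The paper gives no separate proof of this corollary; it is stated immediately after Theorem~\ref{theo.3} and the definition of $i_{\max}$, with the upper bound attributed to Lemma~\ref{lemma2} inside the statement itself. Your argument is therefore considerably more explicit than anything in the paper, and the telescoping device
\[
N_m\,h(m)\;\ge\;\sum_{b=0}^{N_m-1}\bigl(\depth_b(x_m)-\depth_{b+1}(x_m)\bigr)\;=\;\depth_0(x_m)-\depth_{N_m}(x_m)
\]
is a clean reduction that the paper never spells out. You also put your finger on exactly the right soft spot: Theorem~\ref{theo.3} and Lemma~\ref{lemma1} speak of ``incomputable'' values, and one must convert this into a super-computable \emph{growth rate} along a single sequence $(x_m)$ for the corollary to mean anything.

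There is, however, a genuine gap in your lower-bound step. A witness for $\depth_0(x_m)$ is by definition any program $p$ with $U(p)=x_m$ and $|p|\le K(p)$; nothing forces $|p|$ to be close to $K(x_m)$. A $0$-incompressible program for $x_m$ can be much longer than $K(x_m)+O(1)$, so your sentence ``any program for $x_m$ of length at most $K(x_m)+O(1)=O(\log|x_m|)$ --- in particular any witness admissible in $\depth_0(x_m)$'' is not justified. Your G\'acs-based alternative has the same defect: a computable bound on $\depth_0(x_m)$ tells you only that \emph{some} $0$-incompressible program halts quickly, not that a \emph{shortest} program does, so you cannot run Lemma~\ref{lemma1}'s argument to recover $K(x_m)$ from $x_m$. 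In fact the paper itself blurs this distinction (the remark after Definition~\ref{def.final} writes $time(x^*)\ge\depth_0(x)$ as if $x^*$ were $0$-incompressible rather than merely $O(1)$-incompressible, and the proof of Theorem~\ref{theo.3} identifies $\depth_0(x)$ with the quantity of Lemma~\ref{lemma1}), so the gap you hit is inherited from the paper's informal treatment rather than introduced by you. To make your argument airtight you would need either an independent lower bound on $\depth_0(x_m)$ that does not presuppose short witnesses, or to shift the whole telescoping so that its top end is the running time of an actual shortest program (the quantity Lemma~\ref{lemma1} genuinely controls) rather than $\depth_0(x_m)$.
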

\begin{theorem}\label{theo.BB}
There is an infinite sequence $x_1,x_2, \ldots$
with $|x_{m+1}| = |x_m|+1$  such that 
$h(m) \leq \ld{0}(x_m)  \leq BB(|x_m|+O(\log |x_m|))$.
(Note that $\ld{0}(x_m)$ is the shortest time of a computation 
of a shortest program for $x$ by $U$.)
\end{theorem}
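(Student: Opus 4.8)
The plan is to reuse the string sequence already produced by Theorem~\ref{theo.3} and Corollary~\ref{cor.2}, and to observe that the two inequalities are essentially forced by the constructions already in place. Fix the infinite sequence $x_1,x_2,\ldots$ from Corollary~\ref{cor.2} with $|x_{m+1}|=|x_m|+1$. The right-hand inequality $\ld{0}(x_m)\leq BB(|x_m|+O(\log|x_m|))$ is immediate from Lemma~\ref{lemma2}: $\ld{0}(x_m)$ is at most the running time of a shortest program for $x_m$, and that running time is bounded by $BB(|x_m|+O(\log|x_m|))$. So the content is entirely in the left-hand inequality $h(m)\leq \ld{0}(x_m)$.

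For the left-hand inequality, I would argue that for each $m$ we have $h(m)=\ld{i_{m,\max}}(x_m)-\ld{i_{m+1,\max}}(x_m)\leq \ld{i_{m,\max}}(x_m)\leq \ld{0}(x_m)$, where the last step uses the monotonicity noted in the remark after Definition~\ref{def.final}: $\ld{b}(x)$ is nonincreasing in $b$, so $\ld{0}(x_m)$ is the largest of all the depths $\ld{i}(x_m)$. Since $h(m)$ is a difference of a nonnegative term and another nonnegative term, it is bounded above by $\ld{i_{m,\max}}(x_m)$ alone, hence by $\ld{0}(x_m)$. This gives $h(m)\leq\ld{0}(x_m)\leq BB(|x_m|+O(\log|x_m|))$ as claimed, and since by Corollary~\ref{cor.2} the function $h(m)$ already rises faster than any computable function, the same is true a fortiori of $\ld{0}(x_m)$; this is what justifies the parenthetical remark that $\ld{0}(x_m)$ (the shortest time of a computation of a shortest program for $x_m$) grows faster than any computable function but not so fast as the Busy Beaver function.

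The only subtle point — and the step I expect to require the most care — is bookkeeping of the $O(\log|x_m|)$ additive term inside the $BB$ argument: one must check that the constant hidden in the self-delimiting encoding used in Lemma~\ref{lemma2} is uniform over the whole sequence, so that a single $O(\log|x_m|)$ suffices for all $m$. This is routine because the encoding $x'=\overline{|x|}\,x$ and the $O(1)$-bit printing routine $r$ do not depend on $m$. A second, even more minor point is to confirm that the $i_{m,\max}$ guaranteed by Theorem~\ref{theo.3} really does satisfy $0\leq i_{m,\max}<|x_m|-K(x_m)+O(\log|x_m|)$, so that $\ld{i_{m,\max}}(x_m)$ and $\ld{i_{m+1,\max}}(x_m)$ are both well-defined depths in the range over which monotonicity applies; this is exactly what Theorem~\ref{theo.3} delivers, so no new work is needed.
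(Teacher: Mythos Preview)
Your proposal is correct and follows essentially the same approach as the paper: the paper's proof simply invokes the monotonicity of $\ld{b}(x)$ in $b$ (from Definition~\ref{def.final}) together with Lemma~\ref{lemma2} and Corollary~\ref{cor.2}, and you have spelled out precisely how those three ingredients combine to give both inequalities. Your version is in fact more explicit than the paper's two-sentence proof, but the underlying argument is identical.
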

\begin{proof}
\rm
The logical depth function $\ld{b}(x)$ 
is monotonic nonincreasing in the significance argument $b$
for all strings $x$ by its Definition~\ref{def.final}. 
By Lemma~\ref{lemma2} and Corrollary~\ref{cor.2} the theorem follows.
\end{proof}

\section{Conclusion}

We studied the behavior of the 
logical depth function associated
with a string $x$ of length $n$. 
This function is monotonic nonincreasing. For
argument 0 the logical depth is the minimum running time of the
computation from a shortest program for $x$ to $x$. The
function decreases to $O(n \log n)$ for the argument 
$|x|-K(x)+O(\log |x|)$.
We show that there is an infinite sequence of strings such that 
the difference in 
logical depths of some significance levels differing by one
rises faster than any computable function,
that is, incomputably fast, but not more than the Busy Beaver function.
This shows that 
logical depth can increase tremendously for
only an incremental difference in significance.
Moreover, there is an infinite sequence of strings such that the
minimal computation times of associated shortest 
programs 
rises incomputably fast  but not so fast as the 
Busy Beaver function with as argument
twice the length of the string plus an 
additive logarithmic term in the length.

\section*{Acknowledgments}
We thanks Bruno Bauwens for helpful discussions and comments, 
and the anonymous referee for additional comments and the new proof
of Theorem~\ref{Teo1}.

\bibliographystyle{plain}

\end{document}